\let\oldbibliography\thebibliography
\renewcommand{\thebibliography}[1]{\oldbibliography{#1}
\setlength{\itemsep}{-2pt}} 
\DeclareMathOperator{\bcw}{{\boldsymbol{\scriptstyle\mathcal{W}}}}
\DeclareMathOperator{\T}{\mathsf{T}}
\DeclareMathOperator{\E}{\mathds{E}}
\DeclareMathOperator{\w}{\boldsymbol{w}}
\DeclareMathOperator{\x}{\boldsymbol{x}}
\DeclareMathOperator{\s}{\boldsymbol{s}}
\DeclareMathOperator{\h}{\boldsymbol{h}}
\theoremstyle{plain}
\newtheorem{definition}{Definition}
\newtheorem{assumption}{Assumption}
\newtheorem{theorem}{Theorem}
\newtheorem{lemma}{Lemma}
\title{Linear Speedup in Saddle-Point Escape for Decentralized Non-Convex Optimization}
\name{Stefan Vlaski and Ali H. Sayed\thanks{This work was supported in part by NSF grant CCF-1524250. Emails:\{stefan.vlaski, ali.sayed\}@epfl.ch.}
\address{School of Engineering, \'{E}cole Polytechnique F\'{e}d\'{e}rale de Lausanne}}
\begin{document}
\ninept
\maketitle
\begin{abstract}
Under appropriate cooperation protocols and parameter choices, fully decentralized solutions for stochastic optimization have been shown to match the performance of centralized solutions and result in linear speedup (in the number of agents) relative to non-cooperative approaches in the strongly-convex setting. More recently, these results have been extended to the pursuit of first-order stationary points in non-convex environments. In this work, we examine in detail the dependence of second-order convergence guarantees on the spectral properties of the combination policy for non-convex multi agent optimization. We establish linear speedup in saddle-point escape time in the number of agents for symmetric combination policies and study the potential for further improvement by employing asymmetric combination weights. The results imply that a linear speedup can be expected in the pursuit of \emph{second-order stationary} points, which exclude local maxima as well as strict saddle-points and correspond to local or even global minima in many important learning settings.
\end{abstract}
\begin{keywords}
Non-convex optimization, saddle-point, second-order stationarity, minima, decentralized algorithm, centralized algorithm, diffusion strategy.
\end{keywords}
\section{Introduction and Related Work}
\label{sec:intro}
We consider a collection of \( K \) agents, where each agent \( k \) is equipped with a local stochastic cost function:
\begin{equation}
  J_k(w) \triangleq \E Q_k( w; \x_{k} )
\end{equation}
where \( w \in \mathds{R}^M \) denotes a parameter vector and \( \x_{k} \) denotes the random data at agent \( k \). We construct the global cost function:
\begin{equation}\label{eq:global_problem}
  J(w) \triangleq \sum_{k=1}^K p_k J_k(w)
\end{equation}
where the \( p_k \ge 0 \) denote convex combination weights that add up to one, i.e. \( \sum_{k=1}^K p_k = 1 \). When data realizations for \( \x_{k} \) can be aggregated at a central location, descent along the negative gradient of~\eqref{eq:global_problem} can be approximated by means of a \emph{centralized} stochastic gradient algorithm of the form~\cite{Polyak87,Sayed14}:
\begin{equation}\label{eq:centralized}
  \w_i^{\mathrm{cent}} = \w_{i-1} - \mu \widehat{\nabla J}(\w_{i-1}^{\mathrm{cent}})
\end{equation}
where \( \widehat{\nabla J}(\cdot) \) denotes a stochastic gradient approximation constructed at time \( i-1 \). One possible construction is to let:
\begin{equation}\label{eq:k_grad}
  \widehat{\nabla J}^{c, K}(\w_{i-1}^{\mathrm{cent}}) \triangleq \sum_{k=1}^K p_k \nabla Q_k( \w_{i-1}^{\mathrm{cent}}; \x_{k,i-1} )
\end{equation}
which is obtained by employing a weighted combination of instantaneous approximations using all \( K \) realizations available at time \( i-1 \). This construction requires the evaluation of \( K \) (stochastic) gradients per iteration. If computational constraints limit the number of gradient evaluations per iteration to one, we can instead randomly sample an agent location \( k \) from the available data and let:
\begin{equation}\label{eq:1_grad}
  \widehat{\nabla J}^{c, 1}(\w_{i-1}^{\mathrm{cent}}) = \nabla Q_k( \w_{i-1}^{\mathrm{cent}}; \x_{k,i-1} ), \ \mathrm{with}\ \mathrm{prob.}\ p_k,
\end{equation}
The evident drawback of such a simplified centralized strategy is that only one sample is processed and a large number of samples is discarded at every iteration. We can hence expect the construction~\eqref{eq:k_grad} to result in better performance relative to the simplified choice~\eqref{eq:1_grad}. When communication constraints limit the exchange of information among agents, we can instead appeal to decentralized strategies. For the purpose of this work, we shall focus on the standard diffusion strategy, which takes the form:
\begin{subequations}
\begin{align}
  \boldsymbol{\phi}_{k,i} &= \w_{k,i-1} - \mu \nabla Q_k( \w_{k, i-1}; \x_{k,i-1} ) \label{eq:adapt}\\
  \w_{k,i} &= \sum_{\ell=1}^{N} a_{\ell k} \boldsymbol{\phi}_{\ell,i}\label{eq:combine}
\end{align}
\end{subequations}
where \( a_{\ell k} \) denote convex combination coefficients satisfying:
\begin{equation}\label{eq:combinationcoef}
    a_{\ell k} \geq 0, \quad \sum_{\ell \in \mathcal{N}_k} a_{\ell k}=1, \quad a_{\ell k} = 0\ \mathrm{if}\ \ell \notin \mathcal{N}_k
  \end{equation}
The symbol \( \mathcal{N}_k \) denotes the set of neighbors of agent \( k \). When the graph is strongly-connected, it follows from the Perron-Frobenius theorem that the combination matrix \( A \) has a spectral radius of one and a single eigenvalue at one with corresponding eigenvector~\cite{Sayed14}:
\begin{equation}\label{eq:perron}
  Ap=p, \quad \mathds{1}^{\T} p=1, \quad p_k > 0
\end{equation}
Comparing the diffusion strategy~\eqref{eq:adapt}--\eqref{eq:combine} to the centralized constructions~\eqref{eq:k_grad} or~\eqref{eq:1_grad}, we observe that the adaptation step~\eqref{eq:adapt} carries the same complexity \emph{per agent} as the simplified construction~\eqref{eq:1_grad}. However, since these computations are performed at \( K \) agents in parallel, and the information is diffused over the network through the combination step~\eqref{eq:combine}, we expect the diffusion strategy~\eqref{eq:adapt}--\eqref{eq:combine} to outperform the simplified centralized strategy~\eqref{eq:1_grad} and more closely match the full construction~\eqref{eq:k_grad}. In fact, the spectral properties~\eqref{eq:perron} of the combination weights~\eqref{eq:combinationcoef} allow us to establish the following relation for the weighted network mean \( \w_{c, i} \triangleq \sum_{k=1}^K p_k \w_{k,i} \)~\cite{Chen15transient, Sayed14}:
\begin{equation}\label{eq:centroid_recursion}
  \w_{c,i} = \w_{c,i-1} - \mu \sum_{k=1}^K p_k \nabla Q_k( \w_{k, i-1}; \x_{k,i-1} )
\end{equation}
which \emph{almost} corresponds to the centralized recursion~\eqref{eq:centralized}--\eqref{eq:k_grad} with the full gradient approximation~\eqref{eq:k_grad} with the only difference being that the stochastic gradients are evaluated at the individual iterates \( \w_{k, i-1} \) instead of the weighted network centroid \(\w_{c, i-1} \). So long as the iterates \( \w_{k, i-1} \) cluster around the network centroid, and under appropriate smoothness conditions on the (stochastic) gradients, it is hence to be expected that the network centroid~\eqref{eq:centroid_recursion} will match the performance of the full gradient approximation~\eqref{eq:k_grad}. This intuition has been studied in great detail and formalized for strongly convex cost functions, establishing that \emph{all} iterates \( \w_{k, i} \) in~\eqref{eq:adapt}--\eqref{eq:combine} will actually match the centralized full gradient approximation~\eqref{eq:k_grad} both in terms of convergence rate~\cite{Chen15transient} and steady-state error~\cite{Chen15performance}, which implies a linear improvement over the simplified construction~\eqref{eq:1_grad} in terms of the number of agents~\cite{Sayed14} when employing a symmetric combination policy for which \( p_k = \frac{1}{K} \).

More recently, these results have been extended to the pursuit of first-order stationary points in non-convex environments~\cite{Lian17, Tang18} for consensus and the exact diffusion algorithm~\cite{Yuan19}. First-order stationary points can include saddle-points and even local maxima and can generate a bottleneck for many optimization algorithms and problem formulations~\cite{Du17}. Hence, the purpose of this work is is to establish that linear speedup can also be expected in the escape from saddle-points and pursuit of second-order stationary points for non-convex optimization problems. To this end, we refine and exploit recent results in~\cite{Vlaski19nonconvexP1, Vlaski19nonconvexP2}.

\subsection{Related Works}
\label{sec:related}
Strategies for decentralized optimization include incremental strategies~\cite{Bertsekas97incremental}, and decentralized gradient descent (or consensus)~\cite{Ram10}, as well as the diffusion algorithm~\cite{Sayed14, Chen15transient, Vlaski19smoothing}. A second class of strategies is based on primal-dual arguments~\cite{Jakovetic11, Duchi12, Jaggi14, Tsianos12, Shi15extra, Yuan19}. While most of these algorithms are applicable to non-convex optimization problems, most performance guarantees in non-convex environments are limited to establishing convergence to first-order stationary points, i.e., points where the gradient is equal to zero~\cite{Lorenzo16, Tatarenko17,Lian17, Tang18, Wang18}.

Landscape analysis of commonly employed loss surfaces has uncovered that in many important settings such as tensor decomposition~\cite{Ge15}, matrix completion~\cite{Ge16}, low-rank recovery~\cite{Ge17}, as well as certain deep learning architectures~\cite{Kawaguchi16}, \emph{all} local minima correspond to \emph{global} minima and \emph{all} other first-order stationary points have a \emph{strict-saddle} property, which states that the Hessian matrix has at least one negative eigenvalue. These results have two implications. First, while first-order stationarity is a useful result in the sense that it ensures stability of the algorithm, even in non-convex environments, it is not sufficient to guarantee satisfactory performance, since first-order stationary points include strict saddle-points, which need not be globally or even locally optimal. On the other hand, establishing the escape from strict saddle-points, is sufficient to establish convergence to \emph{global} optimality in all of these problems.

These observations have sparked a number of works examining second-order guarantees of local descent algorithms. Strategies for the escape from saddle-points can generally be divided into one of two classes. First, since the Hessian at every strict-saddle point, by definition, contains at least one negative eigenvalue, the descent direction can be identified by directly employing the Hessian matrix~\cite{Nesterov06} or through an intermediate search for the negative curvature direction~\cite{Fang18, Allen18neon}. The second class of strategies leverages the fact that perturbations in the initialization~\cite{Lee16} or the update direction~\cite{Ge15, Jin19, HadiDaneshmand18, Vlaski19single} cause iterates of first-order algorithms to not get ``stuck'' in strict saddle-points, which can be shown to be unstable. Recently these results have been extended to decentralized optimization with deterministic gradients and random initialization~\cite{Scutari18} as well as stochastic gradients with diminishing step-size and decaying additive noise~\cite{Swenson19} as well as constant step-sizes~\cite{Vlaski19nonconvexP1, Vlaski19nonconvexP2}. We establish in this work, that the saddle-point escape time of the diffusion strategy~\eqref{eq:adapt}--\eqref{eq:combine} decays linearly with the number of agents in the network when symmetric combination policies are employed and show how asymmetric combination policies can result in further improvement when agents have access to estimates of varying quality.

\section{Modeling Conditions}
\label{sec:modeling}
We shall be employing the following common modeling conditions~\cite{Sayed14, Ge15, HadiDaneshmand18, Swenson19}. See~\cite{Vlaski19nonconvexP1, Vlaski19nonconvexP2} for a discussion.
\begin{assumption}[\textbf{Smoothness}]\label{as:lipschitz}
  For each \( k \), the gradient \( \nabla J_k(\cdot) \) is Lipschitz, namely, for any \( x,y \in \mathds{R}^{M} \):
  \begin{equation}\label{eq:lipschitz}
    \|\nabla J_k(x) - \nabla J_k(y)\| \le \delta \|x-y\|
  \end{equation}
	Furthermore, \( J_k(\cdot) \) is twice-differentiable with Lipschitz Hessian:
  \begin{equation}
    {\| \nabla^2 J_k(x) - \nabla^2 J_k(y) \|} \le \rho \|x - y\|
  \end{equation}
	For each pair of agents \( k \) and \( \ell \), the gradient disagreement is bounded, namely, for any \( x \in \mathds{R}^{M} \):
  \begin{equation}\label{eq:bounded}
    \|\nabla J_k(x) - \nabla J_{\ell}(x)\| \le G
  \end{equation}
\end{assumption}\hfill\qed
\begin{assumption}[\textbf{Gradient noise process}]\label{as:gradientnoise}
  For each \( k \), the gradient noise process is defined as
  \begin{equation}
    \s_{k,i}(\w_{k,i-1}) = \widehat{\nabla J}_k(\w_{k,i-1}) - \nabla J_k(\w_{k,i-1})
  \end{equation}
  and satisfies
  \begin{subequations}
    \begin{align}
      \E \left\{ \s_{k,i}(\w_{k,i-1}) | \boldsymbol{\mathcal{F}}_{i-1} \right\} &= 0 \label{eq:conditional_zero_mean}\\
      \E \left\{ \|\s_{k,i}(\w_{k,i-1})\|^4 | \boldsymbol{\mathcal{F}}_{i-1} \right\} &\le \sigma_k^4 \label{eq:gradientnoise_fourth}
    \end{align}
  \end{subequations}
  where we denote by \( \boldsymbol{\mathcal{F}}_{i} \) the filtration generated by the random processes \( \w_{k, j} \) for all \( k \) and \( j \le i \) and for some non-negative constants \( \sigma_k^4 \). We also assume that the gradient noise processes are pairwise uncorrelated over the space conditioned on \( \boldsymbol{\mathcal{F}}_{i-1} \).\hfill\qed%
\end{assumption}
\begin{assumption}[\textbf{Lipschitz covariances}]\label{as:lipschitz_covariance}
  The gradient noise process has a Lipschitz covariance matrix, i.e.,
  \begin{equation}
    R_{s, k}(\w_{k, i-1}) \triangleq \E \left \{ \s_{k, i}(\w_{k, i-1}) {\s_{k, i}(\w_{k, i-1})}^{\T} | \boldsymbol{\mathcal{F}}_{i-1}\right \}
  \end{equation}
  satisfies
  \begin{equation}\label{eq:lipschitz_r}
    \| R_{s, k}(x) - R_{s, k}(y) \| \le \beta_R {\| x - y \|}^{\gamma}
  \end{equation}
  for some \( \beta_R \) and \( 0 < \gamma \le 4\).\hfill\qed
\end{assumption}
\noindent We shall also make the simplifying assumption.
\begin{assumption}[\textbf{Gradient noise lower bound}]\label{as:noise_in_saddle}
  The gradient noise covariance \( R_{s, k}(x) \) at every agent is bounded from below:
  \begin{equation}
    R_{s, k}(x) \ge \sigma_{\ell, k} I
  \end{equation}\hfill\qed
\end{assumption}
\noindent This condition can be loosened significantly by requiring a gradient noise component to be present only in the vicinity of strict saddle-points and only in the local descent direction, see e.g.~\cite{HadiDaneshmand18, Vlaski19nonconvexP2}. Nevertheless, the simplified condition can always be ensured for example by adding a small amount of isotropic noise, similar to~\cite{Ge15, Jin19} and will be sufficient for the purpose this work.

\section{Convergence Analysis}
\subsection{Noise Variance Relations}
The performance guarantees established in~\cite{Vlaski19nonconvexP1, Vlaski19nonconvexP2} depend on the statistical properties of the weighted gradient noise term:
\begin{equation}\label{eq:average_noise}
  \s_{i} \triangleq \sum_{k=1}^K p_k \s_{k, i}(\w_{k, i-1})
\end{equation}
Under assumptions~\ref{as:lipschitz}--\ref{as:noise_in_saddle}, we can refine the bounds from~\cite{Vlaski19nonconvexP1}:
\begin{lemma}[\textbf{Variance Bounds}]\label{LEM:VARIANCE_BOUNDS}
	Under assumptions~\ref{as:lipschitz}--\ref{as:noise_in_saddle} we have:
	\begin{align}
		\E \left \{ {\|\boldsymbol{s}_{i}\|}^2 | \boldsymbol{\mathcal{F}}_{i-1} \right \} &\le \sum_{k=1}^K p_k^2 \sigma_k^2 \label{eq:variance_bound} \\
    \left( \sum_{k=1}^K p_k^2 \sigma_{k,\ell}^2 \right)I &\le \E \s_i \s_i^{\T} \le \left( \sum_{k=1}^K p_k^2 \sigma_k^2 \right)I\label{eq:hessian_bound}
	\end{align}
\end{lemma}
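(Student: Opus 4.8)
The plan is to treat the two displays separately but with the same underlying decomposition. For the conditional second-moment bound~\eqref{eq:variance_bound}, I would start from the definition~\eqref{eq:average_noise} and expand $\E\{\|\s_i\|^2\mid\boldsymbol{\mathcal{F}}_{i-1}\}$ as a double sum $\sum_{k,\ell}p_kp_\ell\,\E\{\s_{k,i}^{\T}\s_{\ell,i}\mid\boldsymbol{\mathcal{F}}_{i-1}\}$. By the pairwise-uncorrelatedness-over-space part of Assumption~\ref{as:gradientnoise}, every cross term with $k\neq\ell$ vanishes conditioned on $\boldsymbol{\mathcal{F}}_{i-1}$, leaving $\sum_k p_k^2\,\E\{\|\s_{k,i}\|^2\mid\boldsymbol{\mathcal{F}}_{i-1}\}$. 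Then I would bound each conditional second moment by $\sigma_k^2$ — this follows from the fourth-moment bound~\eqref{eq:gradientnoise_fourth} via Jensen's inequality, $\E\{\|\s_{k,i}\|^2\mid\boldsymbol{\mathcal{F}}_{i-1}\}\le\big(\E\{\|\s_{k,i}\|^4\mid\boldsymbol{\mathcal{F}}_{i-1}\}\big)^{1/2}\le\sigma_k^2$ — which gives~\eqref{eq:variance_bound} directly.

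For the two-sided bound~\eqref{eq:hessian_bound} on the (unconditional) covariance $\E\s_i\s_i^{\T}$, I would first use the tower property to write $\E\s_i\s_i^{\T}=\E\big\{\E\{\s_i\s_i^{\T}\mid\boldsymbol{\mathcal{F}}_{i-1}\}\big\}$, and again invoke conditional zero mean~\eqref{eq:conditional_zero_mean} together with pairwise uncorrelatedness to kill the cross terms, so that $\E\{\s_i\s_i^{\T}\mid\boldsymbol{\mathcal{F}}_{i-1}\}=\sum_k p_k^2\,R_{s,k}(\w_{k,i-1})$. For the lower bound, Assumption~\ref{as:noise_in_saddle} gives $R_{s,k}(\w_{k,i-1})\ge\sigma_{\ell,k}I$ pointwise (in the PSD order), so the conditional covariance is at least $\big(\sum_k p_k^2\sigma_{\ell,k}^2\big)I$ — note the paper's display writes $\sigma_{k,\ell}^2$, matching the lower-bound constant from Assumption~\ref{as:noise_in_saddle} — and this lower bound survives taking the outer expectation since it is deterministic. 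For the upper bound, I would bound $R_{s,k}(\w_{k,i-1})\preceq\E\{\|\s_{k,i}\|^2\mid\boldsymbol{\mathcal{F}}_{i-1}\}\,I\preceq\sigma_k^2 I$ (using that $R_{s,k}\preceq\operatorname{tr}(R_{s,k})\,I$ is too lossy; cleaner is $x^{\T}R_{s,k}x=\E\{(x^{\T}\s_{k,i})^2\mid\cdot\}\le\|x\|^2\E\{\|\s_{k,i}\|^2\mid\cdot\}\le\|x\|^2\sigma_k^2$), and then take expectations. Summing over $k$ with weights $p_k^2$ and applying the outer expectation to the deterministic bounds yields~\eqref{eq:hessian_bound}.

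The argument is essentially bookkeeping; no single step is a serious obstacle. The only point requiring care is making sure the "pairwise uncorrelated over space conditioned on $\boldsymbol{\mathcal{F}}_{i-1}$" hypothesis is applied at the conditional level before any outer expectation is taken, so that the cross terms genuinely vanish rather than merely being controlled — and, relatedly, keeping track of which bounds are conditional (hence random and needing the tower property) versus deterministic (hence passing unchanged through $\E\{\cdot\}$). I would also double-check the indexing convention on the lower-bound constant ($\sigma_{\ell,k}$ in the assumption versus $\sigma_{k,\ell}$ in the lemma statement) to confirm they refer to the same quantity, but this is cosmetic.
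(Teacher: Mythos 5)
Your proposal is correct and follows the same route as the paper, which simply states that both bounds follow from the pairwise uncorrelatedness condition after cross-multiplying; you have filled in exactly the bookkeeping the paper leaves implicit (killing the cross terms, Jensen's inequality to pass from the fourth-moment bound to the second moment, and the PSD sandwiching of each $R_{s,k}$). No gaps.
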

\begin{proof}
	Relations~\eqref{eq:variance_bound} and~\eqref{eq:hessian_bound} follow from the pairwise uncorrelatedness condition in assumption~\ref{as:gradientnoise} after cross-multiplying.
\end{proof}
From~\eqref{eq:variance_bound} we observe that the average noise term~\eqref{eq:average_noise} driving the network centroid experiences a variance reduction. Specifically, in the case when \( p_k = 1/K \) and \( \sigma_k = \sigma \) we would obtain \( \E \left \{ {\|\boldsymbol{s}_{i}\|}^2 | \boldsymbol{\mathcal{F}}_{i-1} \right \} \le {\sigma^2}/{K} \). This \( K \)-fold reduction in gradient noise variance is at the heart of the improved performance established for strongly-convex costs~\cite{Sayed14} and in the pursuit of first-order stationary points~\cite{Lian17}. We shall establish in the sequel that this improvement also holds in the time required to escape from undesired saddle-points.

\subsection{Space Decomposition}
\begin{definition}[Sets]\label{DEF:SETS}
  The parameter space \( \mathds{R}^M \) is decomposed into:
  \begin{align}
    \mathcal{G} &\triangleq \left \{ w : {\left \| \nabla J(w) \right \|}^2 \ge \mu \frac{c_2}{c_1}\left(1+ \frac{1}{\pi}\right) \right \} \label{eq:define_g}\\
    \mathcal{H} &\triangleq \left \{ w : w \in \mathcal{G}^C, \lambda_{\min}\left( \nabla^2 J(w) \right) \le -\tau \right \} \label{eq:define_h}\\
    \mathcal{M} &\triangleq \left \{ w : w \in \mathcal{G}^C, \lambda_{\min}\left( \nabla^2 J(w) \right) > -\tau \right \} \label{eq:define_m}
  \end{align}
  where \( \tau \) is a small positive parameter, \( 0 < \pi < 1 \) is a parameter to be chosen, \( c_1 = \frac{1}{2} \left(1 - 2 \mu \delta\right) = O(1) \) and \( c_2 =\frac{\delta}{2} \left(\sum_{k=1}^K p_k^2 \sigma_k^2\right) = O\left(\sum_{k=1}^K p_k^2 \sigma_k^2\right) \). Note that \( \mathcal{G}^C = \mathcal{H} \cup \mathcal{M} \). We also define the probabilities \(\pi^{\mathcal{G}}_i \triangleq \mathrm{Pr}\left \{ \w_{c, i} \in \mathcal{G} \right \}\), \( \pi^{\mathcal{H}}_i \triangleq \mathrm{Pr}\left \{ \w_{c, i} \in \mathcal{H} \right \}\) and \( \pi^{\mathcal{M}}_i \triangleq \mathrm{Pr}\left \{ \w_{c, i} \in \mathcal{M} \right \} \). Then for all \( i \), we have \( \pi^{\mathcal{G}}_i + \pi^{\mathcal{H}}_i + \pi^{\mathcal{M}}_i = 1 \). \hfill\qed
\end{definition}
\noindent Points in the complement of \( \mathcal{G} \) have small gradient norm and hence correspond to approximately first-order stationary points. These points are further classified into strict-saddle points \( \mathcal{H} \), where the Hessian has a significant negative eigenvalue, and second-order stationary points \( \mathcal{M} \). Pursuit of second-order stationary points requires descent for points in \( \mathcal{G} \) as well as \( \mathcal{H} \).

\subsection{Performance Guarantees}
Due to space limitations, we forego a detailed discussion on the derivation of the second-order guarantees of the diffusion algorithm~\eqref{eq:adapt}--\eqref{eq:combine} and refer the reader to~\cite{Vlaski19nonconvexP1, Vlaski19nonconvexP2}. We instead briefly list the guarantees resulting from the variance bounds~\eqref{eq:variance_bound}--\eqref{eq:hessian_bound} and will focus on the dependence on the combination policy further below. Adjusting the theorems in~\cite{Vlaski19nonconvexP1, Vlaski19nonconvexP2} to account for the variance bounds~\eqref{eq:variance_bound}--\eqref{eq:hessian_bound}, we obtain:
\begin{theorem}[\textbf{Network disagreement (4th order)}]\label{LEM:NETWORK_DISAGREEMENT_FOURTH}
	Under assumptions~\ref{as:lipschitz}~\ref{as:gradientnoise}, the network disagreement is bounded after sufficient iterations \( i \ge i_o \) by:
	\begin{align}
		&\:\E {\left \| \bcw_i - \left( \mathds{1} p^{\T} \otimes I \right) \bcw_{i} \right \|}^4 \notag \\
    \le&\: \mu^4  \frac{{\left \| \mathcal{V}_L \right \|}^4{\left \|J_{\epsilon}^{\T} \right \|}^4}{{\left(1-{\left \|J_{\epsilon}^{\T} \right \|}\right)}^4} {\| \mathcal{V}_R^{\T} \|}^4 K^2 \left( G^4 + \max_k \sigma_k^4 \right) + o(\mu^4)\label{eq:network_disagreement_fourth}
	\end{align}
  where \( {\left \|J_{\epsilon}^{\T} \right \|} = \lambda_2(A) + \epsilon \approx \lambda_2(A) \) denotes the mixing rate of the adjacency matrix, \( \mathcal{A} = \mathcal{V}_{\epsilon} \mathcal{J} \mathcal{V}_{\epsilon}^{-1} \) with \( \mathcal{V}_{\epsilon} = \mathrm{row}\left\{ p \otimes I, \mathcal{V}_{R} \right\} \) and \( \mathcal{V}_{\epsilon}^{-1} = \mathrm{col}\left\{ \mathds{1}^{\T}, \mathcal{V}_{L}^{\T} \right\} \), \( i_o = {\log\left( o(\mu^4) \right)}/{\log\left( {\left \|J_{\epsilon}^{\T} \right \|} \right)} \) and \( o(\mu^4) \) denotes a term that is higher in order than \( \mu^4 \).
\end{theorem}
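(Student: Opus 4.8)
The plan is to adapt the network‑disagreement recursion of~\cite{Vlaski19nonconvexP1, Vlaski19nonconvexP2} while carefully tracking how every constant scales with the number of agents \(K\) and with the parameters \(G\), \(\sigma_k\). I would begin by stacking the local iterates into network vectors, so that the diffusion step~\eqref{eq:adapt}--\eqref{eq:combine} reads \(\bcw_i = \mathcal{A}\big(\bcw_{i-1} - \mu\,\mathrm{col}\{\nabla Q_k(\w_{k,i-1};\x_{k,i-1})\}_k\big)\), and then apply the disagreement projection \(\mathcal{P}\triangleq I - \mathds{1}p^{\T}\otimes I\). Using left‑stochasticity of the combination matrix (so that \(\mathcal{A}\) fixes the consensus direction \(\mathds{1}\otimes x\)) together with \(\mathcal{P}(\mathds{1}\otimes x)=0\), the Perron component of \(\bcw_{i-1}\) does not enter the recursion for \(\mathcal{P}\bcw_i\), so it closes on the disagreement variable; and, equally important for the final bound, after Taylor‑expanding each \(\nabla J_k(\w_{k,i-1})\) about the centroid \(\w_{c,i-1}\) the consensus‑aligned part of the stacked gradient is \(\mathds{1}\otimes\nabla J(\w_{c,i-1})\), which \(\mathcal{P}\mathcal{A}\) also annihilates — this is why \(\|\nabla J\|\) never appears. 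Passing to the \(\epsilon\)-perturbed Jordan coordinates \(\mathcal{A}=\mathcal{V}_{\epsilon}\mathcal{J}\mathcal{V}_{\epsilon}^{-1}\) turns the restriction of \(\mathcal{P}\mathcal{A}\) to the disagreement subspace into a genuine contraction \(\mathcal{J}_{\epsilon}\) with \(\|\mathcal{J}_{\epsilon}\|=\|J_{\epsilon}^{\T}\|=\lambda_2(A)+\epsilon<1\), at the cost of the transform norms \(\|\mathcal{V}_L\|\) and \(\|\mathcal{V}_R^{\T}\|\) when converting between the two coordinate systems.

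Next I would take fourth‑order moments of the resulting driven contraction. Writing \(d_i = \mathcal{P}\bcw_i\), applying Jensen's inequality in the form \(\|x+y\|^4 \le \frac{1}{t^3}\|x\|^4 + \frac{1}{(1-t)^3}\|y\|^4\) at \(t=\|J_{\epsilon}^{\T}\|\), and using \(\|\mathcal{J}_{\epsilon}z\|^4\le\|J_{\epsilon}^{\T}\|^4\|z\|^4\), I would obtain a scalar inequality of the form
\begin{equation*}
\E\|d_i\|^4 \;\le\; \|J_{\epsilon}^{\T}\|\,\E\|d_{i-1}\|^4 \;+\; \mu^4\,\frac{\|J_{\epsilon}^{\T}\|^4\,\|\mathcal{V}_L\|^4\,\|\mathcal{V}_R^{\T}\|^4}{\big(1-\|J_{\epsilon}^{\T}\|\big)^3}\;\E\Big\| \mathcal{P}\,\mathrm{col}\{\nabla Q_k(\w_{k,i-1};\x_{k,i-1})\}_k \Big\|^4 .
\end{equation*}
To control the forcing I would split each stochastic gradient as \(\nabla Q_k(\w_{k,i-1};\x_{k,i-1}) = \s_{k,i} + \big(\nabla J_k(\w_{k,i-1}) - \nabla J(\w_{k,i-1})\big) + \nabla J(\w_{k,i-1})\): the consensus part \(\mathds{1}\otimes\nabla J(\w_{c,i-1})\) coming from the last term is killed by \(\mathcal{P}\); its remainder is a Lipschitz correction of order \(\delta\|d_{i-1}\|\) by~\eqref{eq:lipschitz}; the middle term is bounded entrywise by \(G\) by~\eqref{eq:bounded}; and the noise term has fourth conditional moment at most \(\sigma_k^4\) by~\eqref{eq:gradientnoise_fourth} and is conditionally zero‑mean by~\eqref{eq:conditional_zero_mean}, which removes the odd cross terms. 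Since the stacked vector has \(K\) blocks, \(\|\mathrm{col}\{v_k\}_k\|^4 = \big(\sum_k\|v_k\|^2\big)^2 \le K^2\max_k\|v_k\|^4\), which produces the \(K^2\) factor together with the combination \(G^4+\max_k\sigma_k^4\); the Lipschitz correction feeds back a coefficient of order \(\mu^4\delta^4\) onto \(\E\|d_{i-1}\|^4\), which for small \(\mu\) keeps the effective rate bounded away from one and contributes only to the \(o(\mu^4)\) remainder.

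Finally I would iterate the scalar inequality. Summing the geometric series with rate \(\|J_{\epsilon}^{\T}\|\) converts \(1/(1-\|J_{\epsilon}^{\T}\|)^3\) into \(1/(1-\|J_{\epsilon}^{\T}\|)^4\) and gives the claimed steady‑state bound; the contribution of the initial disagreement decays like \(\|J_{\epsilon}^{\T}\|^i\) and drops below \(o(\mu^4)\) once \(i\ge i_o = \log(o(\mu^4))/\log(\|J_{\epsilon}^{\T}\|)\), which pins down the burn‑in time. I do not expect a genuinely new analytical difficulty, since the structural recursion is inherited from~\cite{Vlaski19nonconvexP1, Vlaski19nonconvexP2}; the delicate part is the bookkeeping in the fourth‑order forcing term — verifying that the consensus direction really cancels so that \(\|\nabla J\|\) is absent, that the Lipschitz self‑coupling term is legitimately \(o(\mu^4)\), and that the constants emerge with exactly the stated powers of \(\|\mathcal{V}_L\|\), \(\|\mathcal{V}_R^{\T}\|\), \(\|J_{\epsilon}^{\T}\|\), and of \(K\) after iterating the recursion to steady state.
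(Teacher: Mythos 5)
Your proposal is correct and follows essentially the same route as the paper, which itself defers to the network-disagreement analysis of~\cite{Vlaski19nonconvexP1}: project onto the disagreement subspace via \( I - \mathds{1}p^{\T}\otimes I \), contract through the \(\epsilon\)-perturbed Jordan form of \(\mathcal{A}\), bound the fourth moment of the forcing term using the gradient-disagreement bound \(G\) and the fourth-order noise moments, and sum the geometric series to obtain the \((1-\|J_{\epsilon}^{\T}\|)^{-4}\) factor and the burn-in time \(i_o\). The bookkeeping you flag as delicate (cancellation of the consensus-aligned gradient, the \(K^2\) factor from the stacked norm, and the \(o(\mu^4)\) absorption of the Lipschitz self-coupling) is exactly the adjustment the paper has in mind.
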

\begin{proof}
	The argument is an adjustment of~\cite[Theorem 1]{Vlaski19nonconvexP1}.
\end{proof}
This result ensures that the entire network clusters around the network centroid \( \w_{c, i} \) after sufficient iterations, allowing us to leverage it as a proxy for all agents.
\begin{theorem}[\textbf{Descent relation}]\label{TH:DESCENT_RELATION}
	Beginning at \( \w_{c, i-1} \) in the large gradient regime \( \mathcal{G} \), we can bound:
  \begin{align}\label{eq:descent_in_g}
    &\:\E \left \{ J(\w_{c, i}) | \w_{c, i-1} \in \mathcal{G} \right \} \notag \\
    \le&\: \E \left \{ J(\w_{c, i-1}) | \w_{c, i-1} \in \mathcal{G} \right \} - \mu^2 \frac{c_2}{\pi} + \frac{O(\mu^3)}{\pi_{i-1}^{\mathcal{G}}}
  \end{align}
	{as long as \( \pi_{i-1}^{\mathcal{G}} = \mathrm{Pr}\left \{ \w_{c, i-1} \in \mathcal{G} \right \} \neq 0 \)} where the relevant constants are listed in definition~\ref{DEF:SETS}.
\end{theorem}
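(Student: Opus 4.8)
The plan is to propagate the centroid recursion~\eqref{eq:centroid_recursion} through the Lipschitz smoothness of $J(\cdot)$ and then close the recursion with the variance bound~\eqref{eq:variance_bound}, following the argument of~\cite{Vlaski19nonconvexP1, Vlaski19nonconvexP2} with the refined constant $c_2$. First I would rewrite~\eqref{eq:centroid_recursion} as $\w_{c, i} = \w_{c, i-1} - \mu \nabla J(\w_{c, i-1}) - \mu \boldsymbol{d}_{i-1} - \mu \s_i$, where $\boldsymbol{d}_{i-1} \triangleq \sum_{k=1}^{K} p_k \big( \nabla J_k(\w_{k, i-1}) - \nabla J_k(\w_{c, i-1}) \big)$ collects the disagreement-induced perturbation (here we used $J = \sum_k p_k J_k$, so that the $\w_{c,i-1}$-gradients add up to $\nabla J(\w_{c,i-1})$) and $\s_i$ is the weighted gradient noise~\eqref{eq:average_noise}. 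Lipschitz continuity of $\nabla J$ (Assumption~\ref{as:lipschitz}) gives $J(\w_{c, i}) \le J(\w_{c, i-1}) + \nabla J(\w_{c, i-1})^{\T}(\w_{c, i} - \w_{c, i-1}) + \tfrac{\delta}{2}\|\w_{c, i} - \w_{c, i-1}\|^2$; I would substitute the recursion into the linear and quadratic terms, expand, and take the conditional expectation given $\boldsymbol{\mathcal{F}}_{i-1}$.

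In that conditional expectation every term linear in $\s_i$ vanishes by~\eqref{eq:conditional_zero_mean} (since $\nabla J(\w_{c, i-1})$ and $\boldsymbol{d}_{i-1}$ are $\boldsymbol{\mathcal{F}}_{i-1}$-measurable), while $\tfrac{\delta \mu^2}{2}\E\{\|\s_i\|^2 \mid \boldsymbol{\mathcal{F}}_{i-1}\} \le \tfrac{\delta \mu^2}{2}\sum_{k=1}^{K} p_k^2 \sigma_k^2 = \mu^2 c_2$ by~\eqref{eq:variance_bound} and the definition of $c_2$. The surviving cross terms $-\mu\,\nabla J(\w_{c, i-1})^{\T}\boldsymbol{d}_{i-1}$ from the linear part and $\delta\mu^2\,\nabla J(\w_{c, i-1})^{\T}\boldsymbol{d}_{i-1}$ from the expanded quadratic part are split by Young's inequality with parameter one; together with the $\tfrac{\delta\mu^2}{2}\|\nabla J(\w_{c, i-1})\|^2$ piece of the quadratic term this makes the coefficient of $\|\nabla J(\w_{c, i-1})\|^2$ equal to $-\mu + \tfrac{\mu}{2} + \tfrac{\delta\mu^2}{2} + \tfrac{\delta\mu^2}{2} = -\mu c_1$ with $c_1 = \tfrac12(1 - 2\mu\delta)$, which is precisely how $c_1$ in Definition~\ref{DEF:SETS} is chosen, and leaves a residual of order $\mu\,\|\boldsymbol{d}_{i-1}\|^2$; by Assumption~\ref{as:lipschitz} we have $\|\boldsymbol{d}_{i-1}\|^2 \le \delta^2\,\|\bcw_{i-1} - (\mathds{1}p^{\T}\otimes I)\bcw_{i-1}\|^2$. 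On the event $\{\w_{c, i-1}\in\mathcal{G}\}$ the defining inequality~\eqref{eq:define_g} gives $\|\nabla J(\w_{c, i-1})\|^2 \ge \mu\tfrac{c_2}{c_1}(1 + \tfrac{1}{\pi})$, hence $-\mu c_1\|\nabla J(\w_{c, i-1})\|^2 \le -\mu^2 c_2(1 + \tfrac{1}{\pi})$, and adding the $+\mu^2 c_2$ noise term leaves exactly $-\mu^2\tfrac{c_2}{\pi}$, the claimed rate.

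It remains to pass from this $\boldsymbol{\mathcal{F}}_{i-1}$-conditional inequality, which holds pointwise on $\{\w_{c, i-1}\in\mathcal{G}\}$, to the statement conditioned on $\{\w_{c, i-1}\in\mathcal{G}\}$. Since that event is $\boldsymbol{\mathcal{F}}_{i-1}$-measurable, averaging via the tower property turns $\E\{J(\w_{c, i})\mid\boldsymbol{\mathcal{F}}_{i-1}\}$ into $\E\{J(\w_{c, i})\mid\w_{c, i-1}\in\mathcal{G}\}$ and $J(\w_{c, i-1})$ into $\E\{J(\w_{c, i-1})\mid\w_{c, i-1}\in\mathcal{G}\}$, while the residual becomes $\mu\,\delta^2\,\E\{\|\bcw_{i-1} - (\mathds{1}p^{\T}\otimes I)\bcw_{i-1}\|^2 \mid \w_{c, i-1}\in\mathcal{G}\}$; using $\E\{X\mid\mathcal{A}\} \le \E\{X\}/\mathrm{Pr}(\mathcal{A})$, Jensen's inequality and the fourth-order bound of Theorem~\ref{LEM:NETWORK_DISAGREEMENT_FOURTH}, this residual is at most $O(\mu)\cdot O(\mu^2)/\pi_{i-1}^{\mathcal{G}} = O(\mu^3)/\pi_{i-1}^{\mathcal{G}}$, which is exactly the stated correction term. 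I expect the main obstacle to be this bookkeeping of the conditioning: one must track which error terms are deterministic pointwise bounds, that survive restriction to $\mathcal{G}$ unchanged, and which are unconditional expectation bounds such as the network-disagreement estimate, that pick up the factor $1/\pi_{i-1}^{\mathcal{G}}$ upon restriction, so that the division by $\pi_{i-1}^{\mathcal{G}}$ attaches to precisely the right terms; the accompanying delicate point is to make the effective coefficient of $\|\nabla J(\w_{c, i-1})\|^2$ match $c_1$, so that the factor $(1 + \tfrac{1}{\pi})$ hidden in the definition of $\mathcal{G}$ cancels against the noise contribution and leaves the clean descent $-\mu^2 c_2/\pi$.
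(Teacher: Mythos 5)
Your proposal is correct and follows essentially the same route as the paper, which itself only cites the argument as an adjustment of the descent lemma in~\cite{Vlaski19nonconvexP1}: expand the centroid recursion~\eqref{eq:centroid_recursion} into gradient, disagreement, and noise components, apply the smoothness bound, cancel the noise contribution \( \mu^2 c_2 \) against the margin built into the definition of \( \mathcal{G} \), and absorb the network-disagreement residual via Theorem~\ref{LEM:NETWORK_DISAGREEMENT_FOURTH} with the \( 1/\pi_{i-1}^{\mathcal{G}} \) factor from conditioning. Your bookkeeping of the constants \( c_1, c_2 \) and of which terms acquire the \( 1/\pi_{i-1}^{\mathcal{G}} \) factor matches the intended argument; the only implicit requirement worth stating is \( i \ge i_o \) so that the disagreement bound applies, and \( \mu < 1/(2\delta) \) so that \( c_1 > 0 \).
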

\begin{proof}
	The argument is an adjustment of~\cite[Theorem 2]{Vlaski19nonconvexP1}.
\end{proof}
\begin{theorem}[\textbf{Descent through strict saddle-points}]\label{TH:DESCENT_THROUGH_SADDLE_POINTS}
  {Suppose \( \pi_{i^{\star}}^{\mathcal{H}} \neq 0 \), i.e., \( \w_{c, i^{\star}} \)} is approximately stationary with significant negative eigenvalue. Then, iterating for \( i^s \) iterations after \( i^{\star} \) with
  \begin{align}
    i^{s} =  \frac{\log\left( 2 M  \frac{\left(\sum_{k=1}^K p_k^2 \sigma_k^2\right)}{\left(\sum_{k=1}^K p_k^2 \sigma_{k, \ell}^2\right)} + 1 \right)}{O(\mu \tau)}
  \end{align}
  guarantees
  \begin{align}
    &\: \E \left \{ J(\w_{c, i^{\star}+i^s}) | \w_{c, i^{\star}} \in \mathcal{H} \right \} \notag \\
    \le&\: \E \left \{ J(\w_{c, i^{\star}}) | \w_{c, i^{\star}} \in \mathcal{H} \right \} - \frac{\mu}{2} M \left(\sum_{k=1}^K p_k^2 \sigma_k^2\right) + \frac{o(\mu)}{\pi_{i^{\star}}^{\mathcal{H}}}
  \end{align}
\end{theorem}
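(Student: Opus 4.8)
The plan is to follow the saddle‑point escape argument of~\cite{Vlaski19nonconvexP1, Vlaski19nonconvexP2} and feed into it the sharpened network gradient‑noise moments of Lemma~\ref{LEM:VARIANCE_BOUNDS}; for brevity set \( \Sigma_u \triangleq \sum_{k=1}^K p_k^2 \sigma_k^2 \) and \( \Sigma_{\ell} \triangleq \sum_{k=1}^K p_k^2 \sigma_{k,\ell}^2 \), so Lemma~\ref{LEM:VARIANCE_BOUNDS} reads \( \E\{\|\s_i\|^2 \,|\, \boldsymbol{\mathcal{F}}_{i-1}\} \le \Sigma_u \) and \( \Sigma_{\ell} I \le \E\{\s_i\s_i^{\T}\,|\,\boldsymbol{\mathcal{F}}_{i-1}\} \le \Sigma_u I \). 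All expectations below are conditioned on \( \w_{c, i^{\star}} \in \mathcal{H} \); then \( \|\nabla J(\w_{c, i^{\star}})\|^2 = O(\mu) \) (since \( \w_{c, i^{\star}} \in \mathcal{G}^C \)) and the Hessian \( H \triangleq \nabla^2 J(\w_{c, i^{\star}}) \) has \( \lambda_{\min}(H) \le -\tau \) (since \( \w_{c, i^{\star}} \in \mathcal{H} \)). First I would introduce the displacement \( \widetilde{\w}_i \triangleq \w_{c, i} - \w_{c, i^{\star}} \) (so \( \widetilde{\w}_{i^{\star}} = 0 \)), expand each \( \nabla J_k(\cdot) \) to first order about \( \w_{c, i^{\star}} \) in the centroid recursion~\eqref{eq:centroid_recursion}, and collect the curvature remainder (governed by the Lipschitz‑Hessian constant \( \rho \) of Assumption~\ref{as:lipschitz}) together with the network‑disagreement terms (governed by Theorem~\ref{LEM:NETWORK_DISAGREEMENT_FOURTH}) into a perturbation \( \boldsymbol{d}_i \), obtaining the short‑term model
\begin{equation}
  \widetilde{\w}_i = (I - \mu H)\,\widetilde{\w}_{i-1} - \mu\nabla J(\w_{c, i^{\star}}) - \mu\s_i + \boldsymbol{d}_i .
\end{equation}

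Second, I would extract geometric amplification along the dominant negative‑curvature direction. Let \( v \) be a unit eigenvector of \( H \) for \( \lambda_{\min}(H) \) and put \( \boldsymbol{z}_i \triangleq v^{\T}\widetilde{\w}_i \); to leading order \( \boldsymbol{z}_i \approx (1 - \mu\lambda_{\min}(H))\,\boldsymbol{z}_{i-1} - \mu\, v^{\T}\s_i \) with amplification factor \( 1 - \mu\lambda_{\min}(H) \ge 1 + \mu\tau > 1 \). Taking conditional second moments, using \( \E\{(v^{\T}\s_i)^2\,|\,\boldsymbol{\mathcal{F}}_{i-1}\} = v^{\T}\E\{\s_i\s_i^{\T}\,|\,\boldsymbol{\mathcal{F}}_{i-1}\}\,v \ge \Sigma_{\ell} \) from Lemma~\ref{LEM:VARIANCE_BOUNDS}, and summing the geometric series over \( i^{\star} \le i \le i^{\star}+i^s \),
\begin{equation}
  \E\{\boldsymbol{z}_{i^{\star}+i^s}^2\} \ \ge\ \mu^2\Sigma_{\ell}\sum_{j=0}^{i^s-1}(1+\mu\tau)^{2j} - o(\mu) \ \ge\ \frac{\mu\Sigma_{\ell}}{2\tau}\,(1+\mu\tau)^{2i^s} - o(\mu) .
\end{equation}
I would then pick \( i^s \) so that \( (1+\mu\tau)^{2i^s} = 2M\,\Sigma_u/\Sigma_{\ell} + 1 \); since \( \log(1+\mu\tau) = O(\mu\tau) \), this is exactly \( i^s = \frac{\log(2M\Sigma_u/\Sigma_{\ell}+1)}{O(\mu\tau)} \) as stated.

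Third, I would convert this amplification into a descent in \( J \). A second‑order Taylor expansion of \( J \) about \( \w_{c, i^{\star}} \), with cubic remainder \( O(\rho\|\widetilde{\w}_{i^{\star}+i^s}\|^3) = o(\mu) \), gives \( J(\w_{c, i^{\star}+i^s}) \le J(\w_{c, i^{\star}}) + \nabla J(\w_{c, i^{\star}})^{\T}\widetilde{\w}_{i^{\star}+i^s} + \tfrac12\widetilde{\w}_{i^{\star}+i^s}^{\T} H\,\widetilde{\w}_{i^{\star}+i^s} + o(\mu) \). Taking expectations: the linear term is non‑positive to leading order (the short‑term drift is along \( -\nabla J(\w_{c, i^{\star}}) \), and \( I-\mu H \succ 0 \) for \( \mu\delta<\tfrac12 \)) and is \( o(\mu) \) since \( \|\nabla J(\w_{c, i^{\star}})\| = O(\sqrt{\mu}) \); for the quadratic term, expanding \( \widetilde{\w}^{\T} H\widetilde{\w} \) in the eigenbasis of \( H \), the non‑positive‑eigenvalue directions (in particular \( v \)) contribute non‑positively, while the positive‑eigenvalue directions are stable and, by the trace bound \( \operatorname{tr}\E\{\s_i\s_i^{\T}\,|\,\boldsymbol{\mathcal{F}}_{i-1}\} = \E\{\|\s_i\|^2\,|\,\boldsymbol{\mathcal{F}}_{i-1}\} \le \Sigma_u \), contribute at most \( O(\mu\Sigma_u) \). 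Retaining the \( v \)-direction and using Step~2,
\begin{equation}
  \tfrac12\,\E\{\widetilde{\w}_{i^{\star}+i^s}^{\T} H\,\widetilde{\w}_{i^{\star}+i^s}\} \ \le\ \tfrac12\,\lambda_{\min}(H)\,\E\{\boldsymbol{z}_{i^{\star}+i^s}^2\} + O(\mu\Sigma_u) \ \le\ -\tfrac{\mu}{4}\,\Sigma_{\ell}\,(1+\mu\tau)^{2i^s} + O(\mu\Sigma_u),
\end{equation}
where \( \tau \) has cancelled against \( \lambda_{\min}(H) \le -\tau \). Substituting the choice of \( i^s \) gives \( -\tfrac{\mu}{2}M\Sigma_u + O(\mu\Sigma_u) \), and the \( O(\mu\Sigma_u) \) stable‑direction term together with the \( o(\mu) \) remainders are dominated by enlarging the (free) constant hidden in \( O(\mu\tau) \) in \( i^s \), leaving \( -\tfrac{\mu}{2}M\Sigma_u \). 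Passing from these dynamical expectations to the conditional expectation given \( \{\w_{c, i^{\star}}\in\mathcal{H}\} \) introduces the familiar \( o(\mu)/\pi_{i^{\star}}^{\mathcal{H}} \) correction, exactly as in Theorem~\ref{TH:DESCENT_RELATION}, which yields the stated inequality.

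The hard part will be the bookkeeping behind the short‑term model of Step~1: the horizon \( i^s = \Theta(\log(\cdot)/(\mu\tau)) \) grows without bound as \( \mu \to 0 \), so one must show that the geometrically‑weighted accumulation of the curvature remainder \( O(\mu\rho\|\widetilde{\w}\|^2) \) and of the (fourth‑order) disagreement terms of Theorem~\ref{LEM:NETWORK_DISAGREEMENT_FOURTH} contributes only \( o(\sqrt{\mu}) \) to \( \widetilde{\w}_{i^{\star}+i^s} \), hence \( o(\mu) \) to the Taylor expansion of \( J \), as long as the trajectory stays in an \( O(\sqrt{\mu}) \)-neighborhood of \( \w_{c, i^{\star}} \). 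Here the fact that \( M \) and \( \Sigma_u/\Sigma_{\ell} \) do not scale with \( \mu \) — so that \( (1+\mu\tau)^{2i^s} = O(1) \) — is what keeps the perturbation terms of higher order; and if the trajectory does leave that neighborhood before step \( i^{\star}+i^s \), it has already moved \( \Omega(\sqrt{\mu}) \) along a direction of curvature at most \( -\tau \), so the descent follows a fortiori. This is the step handled with the most care in~\cite{Vlaski19nonconvexP1, Vlaski19nonconvexP2}, and the present proof amounts to rerunning it with the network moments \( \Sigma_u, \Sigma_{\ell} \) in place of the single‑agent quantities.
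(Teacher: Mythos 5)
The paper offers no proof of its own here beyond the one-line remark that the result is an adjustment of \cite[Theorem 1]{Vlaski19nonconvexP2}, and your proposal is precisely a reconstruction of that cited argument — short-term linearization around the saddle, geometric amplification along the negative-curvature direction driven by the noise lower bound \( \Sigma_{\ell} \), the escape-time choice \( (1+\mu\tau)^{2i^s} = 2M\Sigma_u/\Sigma_{\ell}+1 \), and the second-order Taylor conversion to descent — with the network-level moments \( \Sigma_u, \Sigma_{\ell} \) from Lemma~\ref{LEM:VARIANCE_BOUNDS} substituted for the single-agent quantities. This matches the paper's intended route, and your identification of the long-horizon perturbation bookkeeping as the delicate step (handled in the cited references) is accurate.
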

\begin{proof}
  The argument is an adjustment of~\cite[Theorem 1]{Vlaski19nonconvexP2}.
\end{proof}
Theorem~\ref{TH:DESCENT_RELATION} ensures descent in one iteration as long as the gradient norm is sufficiently large, while~\ref{TH:DESCENT_THROUGH_SADDLE_POINTS} ensures descent even for first-order stationary points, as long as the Hessian has a negative eigenvalue in a number of iterations \( i^s \) that can be bounded. This ensures efficient escape from strict saddle-points. We conclude:
\begin{theorem}\label{TH:FINAL_THEOREM}
  For sufficiently small step-sizes \( \mu \), we have with probability \( 1 - \pi \), that \( \w_{c, i^o} \in \mathcal{M} \), i.e.,
  \begin{equation}\label{eq:first_guarantee}
    \| \nabla J(\w_{c, i^o}) \|^2 \le O\left( \mu \left( \sum_{k=1}^K p_k^2 \sigma_k^2\right) \right)
  \end{equation}
  and \( \lambda_{\min}\left( \nabla^2 J(\w_{c, i^o}) \right) \ge -\tau \) in at most \( i^o \) iterations, where
  \begin{align}\label{eq:conv_guarantee}
    i^o \le \frac{2 \left( J(w_{c, 0}) - J^o \right)}{\mu^2 \delta \left( \sum_{k=1}^K p_k^2 \sigma_k^2\right) \pi} i^s
  \end{align}
\end{theorem}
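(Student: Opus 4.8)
The plan is to follow the descent-based saddle-escape argument of \cite{Vlaski19nonconvexP2}, with the network-dependent variance constants replaced by the sharper bounds of Lemma~\ref{LEM:VARIANCE_BOUNDS}, and to fuse Theorems~\ref{LEM:NETWORK_DISAGREEMENT_FOURTH}--\ref{TH:DESCENT_THROUGH_SADDLE_POINTS} into a single ``descent-budget'' bound on the escape time. First I would invoke Theorem~\ref{LEM:NETWORK_DISAGREEMENT_FOURTH} so that, after the transient \( i \ge i_o \), every agent tracks the weighted centroid \( \w_{c,i} \) to within \( O(\mu^2) \) in mean-square; this legitimizes treating \( \w_{c,i} \) as a proxy for the whole network and lets the analysis proceed entirely with the centroid recursion~\eqref{eq:centroid_recursion} and the decomposition of \( \mathds{R}^M \) into \( \mathcal{G} \), \( \mathcal{H} \), \( \mathcal{M} \) of Definition~\ref{DEF:SETS}. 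The argument then proceeds by contradiction: assume that, with probability larger than \( \pi \), the centroid has not entered \( \mathcal{M} \) by iteration \( i^o \); on this event \( \w_{c,i} \in \mathcal{G} \cup \mathcal{H} \) for every \( i \le i^o \), so at every checkpoint \( \pi^{\mathcal{G}}_i + \pi^{\mathcal{H}}_i > \pi \) and the centroid is always in a ``descending'' region of the space.

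Next I would aggregate the guaranteed descent along such a trajectory. Whenever \( \w_{c,i-1} \in \mathcal{G} \), Theorem~\ref{TH:DESCENT_RELATION} gives one step of expected descent of order \( \mu^2 c_2/\pi \) (after weighting by \( \pi^{\mathcal{G}}_{i-1} \) the remainder \( O(\mu^3)/\pi^{\mathcal{G}}_{i-1} \) becomes an unconditional \( O(\mu^3) \)). Whenever \( \w_{c,i^\star} \in \mathcal{H} \), Theorem~\ref{TH:DESCENT_THROUGH_SADDLE_POINTS} gives, over the next window of length \( i^s \), a net expected descent of \( \tfrac{\mu}{2} M \big( \sum_{k} p_k^2 \sigma_k^2 \big) \), whose weighted residual \( o(\mu)/\pi^{\mathcal{H}}_{i^\star} \) becomes \( o(\mu) \). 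Combining the two with \( \pi^{\mathcal{G}}_i + \pi^{\mathcal{H}}_i > \pi \), the expected cost \( \E J(\w_{c,\cdot}) \) decreases by at least a positive amount of order \( \mu^2 c_2 \pi \) over each stretch of at most \( i^s \) iterations that precedes the hitting time of \( \mathcal{M} \). Since \( J \ge J^o \), no more than \( \big( J(w_{c,0}) - J^o \big)/(\mu^2 c_2 \pi) \) such stretches can occur, else \( \E J(\w_{c,\cdot}) \) would be pushed below \( J^o \).

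Multiplying the number of admissible stretches by their length \( i^s \) gives the escape-time bound \( i^o \le \tfrac{J(w_{c,0}) - J^o}{\mu^2 c_2 \pi}\, i^s \); substituting \( c_2 = \tfrac{\delta}{2} \big( \sum_{k} p_k^2 \sigma_k^2 \big) \) from Definition~\ref{DEF:SETS} reproduces~\eqref{eq:conv_guarantee}. By construction, at the hitting time the centroid lies in \( \mathcal{M} \subseteq \mathcal{G}^C \), so the defining relations~\eqref{eq:define_g} and~\eqref{eq:define_m} give simultaneously \( \| \nabla J(\w_{c,i^o}) \|^2 \le \mu \tfrac{c_2}{c_1} \big( 1 + \tfrac{1}{\pi} \big) = O\!\big( \mu \sum_{k} p_k^2 \sigma_k^2 \big) \), which is~\eqref{eq:first_guarantee}, and \( \lambda_{\min}\!\big( \nabla^2 J(\w_{c,i^o}) \big) \ge -\tau \); this holds with probability \( 1 - \pi \) by the contradiction, completing the proof.

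The main obstacle is the bookkeeping at the interfaces of the three regions. Theorem~\ref{TH:DESCENT_THROUGH_SADDLE_POINTS} is a multi-step (block) statement whereas Theorem~\ref{TH:DESCENT_RELATION} is per-step, so one must interleave them consistently along the trajectory, and one must condition the drift estimates on the event that \( \mathcal{M} \) has not yet been reached --- so that no transient ascent inside \( \mathcal{M} \) can offset the accumulated descent --- which in the framework of \cite{Vlaski19nonconvexP2} is handled by a supermartingale / Markov-inequality step. This is also why \( i^s \) is taken exactly as in Theorem~\ref{TH:DESCENT_THROUGH_SADDLE_POINTS}, with its logarithmic numerator: it keeps the \( o(\mu) \) escape residual dominant over the \( O(\mu^2 i^s) \) slack that a window can accumulate. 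Finally, one has to check that \( \mu \) is small enough that every \( o(\cdot) \) and \( O(\mu^3) \) term inherited from Theorems~\ref{LEM:NETWORK_DISAGREEMENT_FOURTH}--\ref{TH:DESCENT_THROUGH_SADDLE_POINTS} and Lemma~\ref{LEM:VARIANCE_BOUNDS} is negligible next to \( \mu^2 c_2 \pi \) --- this is what ``sufficiently small \( \mu \)'' encodes.
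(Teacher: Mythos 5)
Your proposal is correct and follows essentially the same route as the paper, whose own ``proof'' is simply a pointer to \cite[Theorem 2]{Vlaski19nonconvexP2}: you reconstruct exactly that descent-budget argument (centroid proxy via Theorem~\ref{LEM:NETWORK_DISAGREEMENT_FOURTH}, per-step descent in \( \mathcal{G} \), block descent in \( \mathcal{H} \), counting stretches against \( J(w_{c,0}) - J^o \)) with the variance constants refined by Lemma~\ref{LEM:VARIANCE_BOUNDS}, and your substitution of \( c_2 = \tfrac{\delta}{2}\sum_k p_k^2 \sigma_k^2 \) recovers~\eqref{eq:conv_guarantee} and~\eqref{eq:first_guarantee} as stated.
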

\begin{proof}
  The argument is an adjustment of~\cite[Theorem 2]{Vlaski19nonconvexP2}.
\end{proof}

\section{Comparative Analysis}
\subsection{Step-Size Normalization}
Note that in Theorem~\ref{TH:FINAL_THEOREM}, both the limiting accuracy~\eqref{eq:first_guarantee} and convergence rate~\eqref{eq:conv_guarantee} depend on the combination policy and network size through \( \sum_{k=1}^K p_k^2 \sigma_k^2 \). To facilitate comparison, we shall normalize the step-size in~\eqref{eq:adapt}:
\begin{equation}\label{eq:step_norm}
  \mu' \triangleq \frac{\mu}{\sum_{k=1}^K p_k^2 \sigma_k^2}
\end{equation}
Under this setting, Theorem~\ref{TH:FINAL_THEOREM} ensures a point \( i^o \) satisfying
\begin{equation}\label{eq:first_guarantee_norm}
  \| \nabla J(\w_{c, i^o}) \|^2 \le O\left( \mu \right)
\end{equation}
and \( \lambda_{\min}\left( \nabla^2 J(\w_{c, i^o}) \right) \ge -\tau \) in at most:
\begin{align}\label{eq:conv_guarantee_norm}
  i^o \le \frac{2 \left( J(w_{c, 0}) - J^o \right)}{\mu^2 \delta  \pi} \left( \sum_{k=1}^K p_k^2 \sigma_k^2\right) i^s
\end{align}
iterations with
\begin{equation}\label{eq:escape_time_norm}
  i^{s} = \frac{\log\left( 2 M  \frac{\left(\sum_{k=1}^K p_k^2 \sigma_k^2\right)}{\left(\sum_{k=1}^K p_k^2 \sigma_{k, \ell}^2\right)} + 1 \right)}{O(\mu \tau)} \left(\sum_{k=1}^K p_k^2 \sigma_k^2\right)
\end{equation}
Note that the normalization of the step-size causes~\eqref{eq:first_guarantee_norm} to become independent of \( \left(\sum_{k=1}^K p_k^2 \sigma_k^2\right) \), allowing for the fair evaluation of~\eqref{eq:conv_guarantee_norm} and~\eqref{eq:escape_time_norm} as a function of the number of agents.

\subsection{Linear Speedup Using Symmetric Combination Weights}
When the combination matrix \( A \) is symmetric, i.e., \( A = A^{\T} \), it follows that \( p_k = \frac{1}{K} \)~\cite{Sayed14}. For simplicity, in this section, we shall also assume a uniform data profile for all agents, i.e., that \( \sigma_k = \sigma \) and \( \sigma_{\ell, k} = \sigma_{\ell} \) for all \( k \). We obtain:
\begin{theorem}[\textbf{Linear Speedup for Symmetric Policies}]\label{TH:COR}
  Under the step-size normalization~\eqref{eq:step_norm}, and for symmetric combination policies \( A = A^{\T} \) with the uniform data profile \( \sigma_k^2 = \sigma^2 \) and \( \sigma_{\ell, k}^2 = \sigma_{\ell}^2 \) for all \( k \), the escape time simplifies to:
  \begin{equation}\label{eq:linear_escape}
    i^{s} = \frac{\log\left( 2 M  \frac{\sigma^2}{\sigma_{\ell}^2} + 1 \right)}{O(\mu \tau)} \frac{\sigma^2}{K} = O\left(\frac{1}{\mu \tau K} \right)
  \end{equation}
\end{theorem}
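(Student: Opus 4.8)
The plan is to obtain Theorem~\ref{TH:COR} as a direct specialization of the normalized escape-time formula~\eqref{eq:escape_time_norm} to the symmetric, uniform-profile regime; no estimate beyond the variance bounds of Lemma~\ref{LEM:VARIANCE_BOUNDS} (which feed Theorems~\ref{TH:DESCENT_THROUGH_SADDLE_POINTS} and~\ref{TH:FINAL_THEOREM}) together with the step-size normalization~\eqref{eq:step_norm} is required. First I would recall the standard fact~\cite{Sayed14} that a symmetric combination matrix \( A = A^{\T} \) is doubly-stochastic, so the Perron vector in~\eqref{eq:perron} is uniform, \( p_k = 1/K \) for all \( k \). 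Inserting this, together with the assumed uniform data profile \( \sigma_k^2 = \sigma^2 \) and \( \sigma_{\ell,k}^2 = \sigma_{\ell}^2 \), into the two weighted noise sums \( \sum_k p_k^2 \sigma_k^2 \) and \( \sum_k p_k^2 \sigma_{k,\ell}^2 \) that govern every quantity in Theorems~\ref{TH:DESCENT_THROUGH_SADDLE_POINTS}--\ref{TH:FINAL_THEOREM} gives
\[
  \sum_{k=1}^K p_k^2 \sigma_k^2 \;=\; K \cdot \frac{1}{K^2}\,\sigma^2 \;=\; \frac{\sigma^2}{K}, \qquad \sum_{k=1}^K p_k^2 \sigma_{k,\ell}^2 \;=\; \frac{\sigma_{\ell}^2}{K} .
\]

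Next I would substitute these expressions into~\eqref{eq:escape_time_norm}. The ratio inside the logarithm becomes \( \frac{\sigma^2/K}{\sigma_{\ell}^2/K} = \frac{\sigma^2}{\sigma_{\ell}^2} \), so the dependence on \( K \) cancels there and the logarithmic factor collapses to the network-size-independent constant \( \log\left( 2M\sigma^2/\sigma_{\ell}^2 + 1 \right) = O(1) \). The only remaining explicit multiplicative factor, \( \sum_{k} p_k^2 \sigma_k^2 = \sigma^2/K \), then carries the entire \( K \)-dependence, which yields
\[
  i^{s} \;=\; \frac{\log\left( 2 M \sigma^2/\sigma_{\ell}^2 + 1 \right)}{O(\mu\tau)} \cdot \frac{\sigma^2}{K} \;=\; O\left(\frac{1}{\mu\tau K}\right),
\]
which is precisely the claimed relation~\eqref{eq:linear_escape}.

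Since the statement is a corollary, I do not anticipate a genuine technical obstacle; the only point requiring care is to confirm that the normalization~\eqref{eq:step_norm} does not covertly reintroduce a factor of \( K \) through the \( O(\mu\tau) \) term in the denominator of~\eqref{eq:escape_time_norm}. That hidden constant stems from the smoothness parameters \( \delta, \rho \) and the mixing rate \( \lambda_2(A) \) only, and is independent of \( \{ p_k \} \) and \( \{ \sigma_k \} \); hence replacing the running step-size by \( \mu' = \mu / \left( \sum_k p_k^2 \sigma_k^2 \right) \) rescales \( i^{s} \) exactly through the explicit \( \sum_k p_k^2 \sigma_k^2 = \sigma^2/K \) prefactor already isolated above, and nowhere else. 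A final, routine check is that the step-size smallness conditions underlying Theorems~\ref{TH:DESCENT_RELATION}--\ref{TH:FINAL_THEOREM} can be met uniformly in \( K \); this holds because those conditions constrain the effective step-size \( \mu' \) rather than \( \mu \) itself.
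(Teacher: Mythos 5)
Your proposal is correct and follows exactly the route the paper intends: substituting \( p_k = 1/K \) (from the doubly-stochasticity of a symmetric \( A \)) and the uniform profile \( \sigma_k^2 = \sigma^2 \), \( \sigma_{k,\ell}^2 = \sigma_\ell^2 \) into~\eqref{eq:escape_time_norm}, noting the \( K \)-dependence cancels inside the logarithm and survives only in the prefactor \( \sigma^2/K \). The paper's own proof is simply ``follows immediately after cancellations,'' so your write-up is the same argument made explicit, with the added (and harmless) care about the \( O(\mu\tau) \) constant being independent of \( \{p_k\} \) and \( \{\sigma_k\} \).
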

\begin{proof}
  The result follows immediately after cancellations.
\end{proof}

\subsection{Benefit of Employing Asymmetric Combination Weights}
In this subsection, we show how employing asymmetric combination weights can be beneficial in terms of the time required to escape saddle-points when the data profile across agents is no longer uniform. In particular, we will no longer require the upper and lower bounds \( \sigma_k^2 \) and \( \sigma_{\ell, k}^2 \) to be common for all agents, and no longer require the combination policy to be symmetric. Instead, to simplify the derivation, we assume that the gradient noise is approximately isotropic, i.e., \( \sigma_k^2 \approx \sigma_{k, \ell}^2 \) so that~\eqref{eq:escape_time_norm} can be simplified to:
\begin{align}\label{eq:approx_escape}
  i^{s} \approx O\left( \frac{\sum_{k=1}^K p_k^2 \sigma_k^2}{\mu \tau} \right)
\end{align}
Then, we can formulate the following optimization problem to minimize the escape time \( i^s \) over the space of valid combination policies:
\begin{align}
  \min_A \sum_{k=1}^K p_k^2 \sigma_k^2 \ \ \mathrm{s.t.}\ & a_{\ell k} \geq 0,\ \sum_{\ell \in \mathcal{N}_k} a_{\ell k}=1,\ a_{\ell k} = 0\ \mathrm{if}\ \ell \notin \mathcal{N}_k, \notag\\
  &Ap=p, \quad \mathds{1}^{\T} p=1, \quad p_k > 0.
\end{align}
This precise optimization problem has appeared before in the pursuit of asymmetric combination policies that minimize the steady-state error of the diffusion strategy~\eqref{eq:adapt}--\eqref{eq:combine} in \emph{strongly-convex} environments~\cite{Sayed14}. Its solution is available in closed form and can even be pursued in a decentralized manner, requiring only exchanges among neighbors~\cite{Sayed14}.
\begin{theorem}[\textbf{Metropolis-Hastings Combination Policy~\cite{Sayed14}}]\label{TH:METRO}
  Under the step-size normalization~\eqref{eq:step_norm}, the asymmetric Metropolis-Hastings combination policy minimizes the approximate saddle-point escape time~\eqref{eq:approx_escape}. It takes the form:
  \begin{equation}
    a_{\ell k}^o = \begin{cases} \frac{\sigma_k^2}{\max\{ n_k \sigma_k^2, n_{\ell} \sigma_{\ell}^2 \}}, \ \ \ &\ell \in \mathcal{N}_k, \\ 1 - \sum_{m \in \mathcal{N}_k \backslash \{k\}} a_{m k}^o, \ \ \ &\ell = k. \end{cases}
  \end{equation}
  where \( n_k = | \mathcal{N}_k | \) denotes the size of the neighborhood of agent \( k \).
\end{theorem}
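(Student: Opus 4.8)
The plan is to solve the stated optimization as a constrained minimization over the Perron vector \( p \) and then to confirm that the Metropolis--Hastings policy realizes the optimal \( p \). First I observe that the objective \( \sum_{k=1}^{K} p_k^2 \sigma_k^2 \) depends on the combination matrix \( A \) only through its Perron eigenvector, and that every feasible \( A \) produces, by~\eqref{eq:perron}, a vector \( p \) with strictly positive entries that sum to one. Applying the Cauchy--Schwarz inequality to the pairing of \( (p_k\sigma_k) \) with \( (\sigma_k^{-1}) \) gives
\begin{equation}
  1 = \left( \sum_{k=1}^{K} p_k \right)^{2} = \left( \sum_{k=1}^{K} p_k \sigma_k \cdot \sigma_k^{-1} \right)^{2} \le \left( \sum_{k=1}^{K} p_k^2 \sigma_k^2 \right)\left( \sum_{k=1}^{K} \sigma_k^{-2} \right),
\end{equation}
so that \( \sum_{k=1}^{K} p_k^2 \sigma_k^2 \ge \big( \sum_{k=1}^{K} \sigma_k^{-2} \big)^{-1} \) for \emph{every} feasible \( A \), with equality precisely when \( p_k \propto \sigma_k^{-2} \).

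It then remains to check that the policy \( A^o \) in the statement is feasible and attains this bound. For feasibility, each off-diagonal entry satisfies \( 0 \le a_{\ell k}^o = \sigma_k^2 / \max\{ n_k \sigma_k^2, n_\ell \sigma_\ell^2 \} \le 1/n_k \) and \( a_{\ell k}^o = 0 \) whenever \( \ell \notin \mathcal{N}_k \); hence, since \( k \in \mathcal{N}_k \), \( \sum_{m \in \mathcal{N}_k \backslash \{k\}} a_{mk}^o \le (n_k-1)/n_k < 1 \), which forces the diagonal entry \( a_{kk}^o \) to be positive and each column of \( A^o \) to sum to one; together with strong connectivity this makes \( A^o \) primitive, so its Perron vector is unique. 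To identify that vector I would verify the reversibility relation: for neighbors \( k \) and \( \ell \) the two maxima coincide, so
\begin{equation}
  \frac{a_{\ell k}^o}{a_{k\ell}^o} = \frac{\sigma_k^2 \,/\, \max\{ n_k \sigma_k^2, n_\ell \sigma_\ell^2 \}}{\sigma_\ell^2 \,/\, \max\{ n_\ell \sigma_\ell^2, n_k \sigma_k^2 \}} = \frac{\sigma_k^2}{\sigma_\ell^2},
\end{equation}
and therefore the vector \( p^o \) with entries \( p_k^o = \sigma_k^{-2} / \sum_{m=1}^{K} \sigma_m^{-2} \) satisfies the detailed-balance identity \( p_k^o a_{\ell k}^o = p_\ell^o a_{k\ell}^o \) for all \( k, \ell \) (trivially for \( \ell = k \) and for non-neighbors). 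Summing this identity over \( \ell \) and invoking the column-stochasticity \( \sum_{\ell} a_{\ell k}^o = 1 \) gives \( A^o p^o = p^o \), so \( p^o \) is the Perron vector of \( A^o \). Substituting \( p_k^o \propto \sigma_k^{-2} \) back into the objective yields \( \sum_{k=1}^{K} (p_k^o)^2 \sigma_k^2 = \big( \sum_{k=1}^{K} \sigma_k^{-2} \big)^{-1} \), which meets the lower bound; hence \( A^o \) is a minimizer, and by~\eqref{eq:approx_escape} it minimizes the approximate escape time \( i^s \).

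I do not anticipate a genuine obstacle, since the core is a quadratic minimization over the simplex followed by a direct verification. The one point needing care is the reduction step: the feasible set is a set of matrices \( A \), not of Perron vectors, so the Cauchy--Schwarz bound must be phrased as a bound holding for every feasible \( A \), and optimality must be established by \emph{exhibiting} the feasible matrix \( A^o \) whose Perron vector attains it --- not merely by optimizing over candidate probability vectors. A secondary technical check is that \( A^o \) respects the prescribed sparsity pattern and keeps \( a_{kk}^o \ge 0 \) for all degree and noise configurations, which the uniform bound \( a_{\ell k}^o \le 1/n_k \) delivers; the link between this optimization and the escape-time guarantee of Theorem~\ref{TH:FINAL_THEOREM} is already supplied by~\eqref{eq:approx_escape}.
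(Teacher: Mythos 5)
Your proof is correct. The paper itself offers no proof of Theorem~\ref{TH:METRO} --- it defers entirely to the citation~\cite{Sayed14}, noting only that the optimization problem and its closed-form solution are known from the strongly-convex setting --- and your argument is essentially the standard derivation from that reference: reduce the problem to minimizing \( \sum_k p_k^2 \sigma_k^2 \) over probability vectors, obtain \( p_k \propto \sigma_k^{-2} \) (you use Cauchy--Schwarz where a Lagrangian argument is more common, but the two are equivalent here), and then realize that Perron vector with the Hastings construction. You also correctly handle the one genuinely delicate point, namely that the feasible set consists of sparsity-constrained left-stochastic matrices rather than arbitrary simplex vectors, so the simplex lower bound must be shown to be \emph{attained} by a feasible \( A^o \); your feasibility check (\( a_{\ell k}^o \le 1/n_k \), hence \( a_{kk}^o > 0 \)) and the detailed-balance verification \( p_k^o a_{\ell k}^o = p_\ell^o a_{k \ell}^o \), summed against the column-stochasticity \( \sum_\ell a_{\ell k}^o = 1 \), correctly identify \( p^o \propto (\sigma_k^{-2}) \) as the Perron vector of \( A^o \). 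The only implicit assumption worth stating explicitly is that \( k \in \mathcal{N}_k \) for every agent, which is needed both for \( a_{kk}^o \) to be well defined and for the bound \( \sum_{m \in \mathcal{N}_k \backslash \{k\}} a_{mk}^o \le (n_k - 1)/n_k \).
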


\section{Simulations}
We construct a sample landscape to verify the linear speedup in the size of the network indicated by the analysis in this work. The loss function is constructed from a single-layer neural network with a linear hidden layer and a logistic activation function for the output layer. Penalizing this architecture with the cross-entropy loss gives:
\begin{align}
  J(w_1, W_2) = \E \log\left({1+e^{- \boldsymbol{\gamma} w_1^{\T} W_2 \h}}\right) + \frac{\rho}{2}\|w_1\|^2 + \frac{\rho}{2} \| W_2 \|_F^2
\end{align}
where \( w_1 \) and \( W_2 \) denote the weights of the individual layers, \( \h \in \mathds{R}^M \) denotes the feature vector, and \( \boldsymbol{\gamma} \in \left \{ \pm 1 \right \} \) is the class variable. It can be verified that this loss has a single strict saddle-point at \( w_1 = W_2 = 0 \) and global minima in the positive and negative quadrant, respectively~\cite{Vlaski19nonconvexP2}. We show the evolution of the function value at the network centroid under the step-size normalization rule~\eqref{eq:step_norm} and observe a linear speedup in \( K \), consistent with~\eqref{eq:linear_escape} while noting no significant differences in steady-state performance, which is consistent with~\eqref{eq:first_guarantee_norm}.
\begin{figure}[htb]
  \vspace{-3.5mm}
	\centering
	\includegraphics[width=.95\columnwidth]{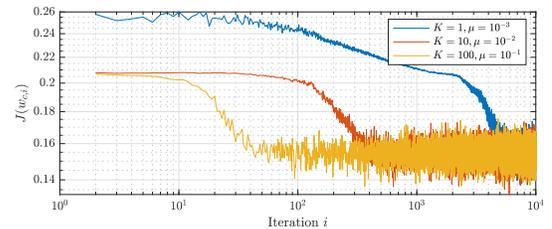}\vspace{-2mm}
\caption{Linear speedup in saddle-point escape time.}\label{fig:performance}
\end{figure}

\clearpage
\bibliographystyle{IEEEbib}
{\bibliography{main}}

\end{document}